\newtheorem{rem}{Remark}
\begin{document}

\title{Further results on the 2-adic complexity of a class of balanced generalized cyclotomic sequences
\thanks{This work is financially supported by the National Natural Science Foundation of China (No. 61902429), Fundamental Research Funds for the Central Universities (No. 19CX02058A), Shandong Provincial Natural Science Foundation of China (ZR2019MF070).\\
$^{\ast}$\ The corresponding author: zhaochune1981@163.com}}


\titlerunning{Autocorrelation distribution and $2$-adic complexity}        

\author{Chun-e Zhao$^{\ast}$\and Yuhua Sun \and Tongjiang Yan}


\institute{Chun-e Zhao, Yuhua Sun, Tongjiang Yan
\at
College of Sciences,
China University of Petroleum,
Qingdao 266555,
Shandong, China
}

\date{Received: date / Accepted: date}

\maketitle

\begin{abstract}
In this paper, the 2-adic complexity of a class of balanced Whiteman generalized cyclotomic sequences of period $pq$ is considered. Through calculating the determinant of the circulant matrix constructed by one of these sequences, we derive a lower bound on the 2-adic complexity of the corresponding sequence, which further expands our previous work (Zhao C, Sun Y and Yan T. Study on 2-adic complexity of a class of balanced generalized cyclotomic sequences. Journal of Cryptologic Research,6(4):455-462, 2019). The result shows that the 2-adic complexity of this class of sequences is large enough to resist the attack of the rational approximation algorithm(RAA) for feedback with carry shift registers(FCSRs), i.e., it is in fact lower bounded by $pq-p-q-1$, which is far larger than one half of the period of the sequences. Particularly, the 2-adic complexity is maximal if suitable parameters are chosen.
\keywords{ 2-adic complexity \and generalized cyclotomic sequences \and  circulant matrix }
\end{abstract}

\section{Introduction}
The concept of the 2-adic complexity of a binary sequence was originally considered by Klapper and Goresky , meanwhile, the rational approximation algorithm (RAA) to attack a given key stream sequence with low 2-adic complexity was also presented by them \cite{Andrew Klapper}. Therefore, the 2-adic complexity of a key stream sequence has become an important security criterion and it should be no less than one half of the period of the sequence according to RAA.

In the past few decades, many generalized cyclotomic sequences with high linear complexity have been constructed and have been widely used in communication systems.
However, the 2-adic complexity of only a small part of these sequences has been completely determined. For example, the 2-adic complexity of twin-prime sequences is maximal which has been proved as a class of sequences with ideal 2-level autocorrelation by Xiong et al. \cite{Xionghai-1}. And the 2-adic complexity of modified Jacobi sequences of period $pq$ was originally showed to satisfy a lower bound $pq-p-q-1$ by Sun et al. \cite{SunY} and then was proved to be maximal in the case of $\frac{q+1}{4}\leq p\leq 4q-1$ by Hofer and Winterhof \cite{Hofer}. Moreover, the 2-adic complexity of another two classes of Whiteman generalized cyclotomic sequences of period $pq$ has also been proved to have large 2-adic complexity by Zeng et al. \cite{Zeng X} and Sun et al. \cite{Sunyuhua-1}. It should be pointed out that all the above sequences are unbalanced.

In this paper, we aim to analyze the 2-adic complexity of a class of balanced Whiteman generalized cyclotomic sequences whose linear complexity has been proved to be large enough to resist B-M algorithm by Bai et al. \cite{Bai}. Our result shows that the 2-adic complexity of these sequences is also large enough and it is even optimal when the parameters are suitable chosen.

The rest of the paper is organized as follows. We give the relevant definitions of Whiteman generalized cyclotomic sequences and some known results Whiteman generalized cyclotomic classes in Section 2. In Section 3, by means of a new property of Gaussian periods and the method of Xiong et al. \cite{Xionghai-1}, we study the 2-adic complexity of  a class of balanced Whiteman generalized cyclotomic sequences and give a nontrivial lower bound on the 2-adic complexity. Finally, we summarize our results in Section 4.

\section{The relevant definitions of a class of balanced Whiteman generalized cyclotomic sequences and some known results of Gaussian periods}
Firstly, we give some definitions and symbols which will be always used in the whole paper unless otherwise specified.

Let $p,q$ be two different primes satisfying $\mathrm{gcd}(p-1,q-1)=2$ and $g$ be a common primitive root of $p$ and $q$. Denote $N=pq$. By China Remainder Theorem, the system of the congruence equations
$$\left\{
\begin{array}{cc}
  x\equiv g & \pmod p, \\
 x\equiv1 & \pmod q
\end{array}
\right.$$
has only one solution $x$ in the ring $Z_{N}$ of residue classes modulo $N$.
Denote $e=(p-1)(q-1)/2$. Then the Whiteman generalized cyclotomic classes $D_{i}$ of order 2 with resect to $p$ and $q$ are defined as
$$D_{i}=\{g^{s}x^{i}\pmod N:s=0,1,\cdots,e-1\},\ \ i=0,1.$$
Let $Z_N^{\ast}$ be the multiplicative group of integers modulo $N$.
Whiteman has showed that
$$Z_{N}^{*}=D_{0}\cup D_{1}, D_{0}\cap D_{1}=\emptyset,$$
where $\emptyset$ is the empty set. The cyclotomic numbers associated with the Whiteman generalized cyclotomic classes of order 2 with respect to $p$ and $q$ are defined by
$$(i,j)=|(D_{i}+1)\cap D_{j}|,$$
where $D_{i}+1=\{x+1|x\in D_{i}\}$, $i=0,1$, $j=0,1$. Let
 \begin{align*}
D_{0}^{(p)}=&\{g^{2t}\pmod p:t=0,1,\cdots,\frac{p-1}{2}-1\},\\
D_{1}^{(p)}=&\{g^{2t+1}\pmod p:t=0,1,\cdots,\frac{p-1}{2}-1\},\\
D_{0}^{(q)}=&\{g^{2t}\pmod q:t=0,1,\cdots\frac{q-1}{2}-1\},\\
D_{1}^{(q)}=&\{g^{2t+1}\pmod q:t=0,1,\cdots,\frac{q-1}{2}-1\}.
 \end{align*}
Then $$P=\{p,2p,\cdots,(q-1)p\}=D_{0}^{(q)}p\cup D_{1}^{(q)}p,$$
$$Q=\{q,2q,\cdots,(p-1)q\}=D_{0}^{(p)}q\cup D_{1}^{(p)}q,$$
where $D_{i}^{(p)}q=\{xq|x\in D_{i}^{(p)}\}$, $D_{i}^{(q)}p=\{xp|x\in D_{i}^{(q)}\}$, $i=0,1.$

Let
$$C_{0}=\{0\}\cup D_{0}\cup D_{0}^{(p)}q\cup D_{0}^{(q)}p,\ C_{1}=D_{1}\cup D_{1}^{(p)}q\cup D_{1}^{(q)}p.$$
Then
$$Z_{N}=C_{0}\cup C_{1}.$$
 Define the sequence
$s=\{s_{i}\}_{i=0}^{N-1}$ of period $N$ as

\begin{equation}\label{eq1}
 s_{i}=\left\{\begin{array}{cc}
            0, & i \pmod N\in C_{0}, \\
            1, & i \pmod N\in C_{1},
          \end{array}\right.
\end{equation}
which was originally given by Ding and Helleseth\cite{Ding C} and has been proved by Bai et al. to have high linear complexity \cite{Bai}.

For an arbitrary binary sequence $s=\{s_{i}\}_{i=0}^{N-1}$ of period $N$, let $S(x)=\sum\limits_{i=0}^{N-1}s_{i}x^{i}\in Z[x]$. If
 \begin{equation}\label{eq2}
 \frac{S(2)}{2^{N}-1}=\frac{\sum\limits_{i=0}^{N-1}s_{i}2^{i}}{2^{N}-1}=\frac{m}{n},\ 0\leq m\leq n,\ \mathrm{gcd}(m,n)=1,
 \end{equation}
 then the 2-adic complexity $\varphi_{2}(s)$ of $s$ is defined by $\lfloor \mathrm{log}_{2}(n+1)\rfloor$, where $\mathrm{gcd}(m,n)$ is the greatest common divisor of the integers $m,n$ and $\lfloor \mathrm{log}_{2}n\rfloor$ is the maximal integer no more than $\mathrm{log}_{2}n$.

 From Eq. \eqref{eq2}, we know that $\varphi_{2}(s)$ can be calculated by
\begin{equation}\label{eq3}
  \varphi_{2}(s)=\Big\lfloor \mathrm{log}_{2}\Big(\frac{2^{N}-1}{\mathrm{gcd}\big(2^{N}-1,S(2)\big)}+1\Big)\Big\rfloor.
\end{equation}

Let $\omega_{p}=e^{\frac{2\pi i}{p}}$ be a complex $p$th primitive root of unity and $\omega_{N}=e^{\frac{2\pi i}{N}}$ be a complex $N$th primitive root of unity, then $\delta_{i}^{p}=\sum\limits_{k\in D_{0}^{(p)}}\omega_{p}^{k}$ and $\eta_{i}=\sum\limits_{k\in D_{i}}\omega_{N}^{k}$ are called Gauss periods based on classical cyclotomic classes $D_{i}^{(p)}$ and generalized cyclotomic classes $D_{i}$ respectively, where $i=0,1$.

In order to give our main result, we need the following results of Gaussian periods.
\begin{lemma}\cite{SunY}\label{SunY}
Let $p,q$ be two distinct primes with $\mathrm{gcd}(p-1,q-1)=2$. Then
\begin{itemize}
\item[(1)] $\eta_{0}+\eta_{1}=1,\eta_{0}\eta_{1}=\frac{1+pq}{4}$;
\item[(2)] for an odd prime $p\equiv1\pmod4$, $\delta_{0}^{p}+\delta_{1}^{p}=-1,\delta_{0}^{p}\delta_{1}^{p}=\frac{1-p}{4}$;
\item[(3)] for an odd prime $p\equiv3\pmod4$, $\delta_{0}^{p}+\delta_{1}^{p}=-1,\delta_{0}^{p}\delta_{1}^{p}=\frac{1+p}{4}$.
\end{itemize}
\end{lemma}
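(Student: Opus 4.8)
\section*{Proof proposal}

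The plan is to dispose of the two sum-identities by a direct root-of-unity count, and to obtain the two product-identities by squaring the \emph{difference} of the two periods, recognising that difference as a quadratic Gauss sum whose square is classically known. For the sums this is immediate: since $D_{0}^{(p)}\cup D_{1}^{(p)}=Z_{p}^{*}$ and $\sum_{k=0}^{p-1}\omega_{p}^{k}=0$, I get $\delta_{0}^{p}+\delta_{1}^{p}=\sum_{k=1}^{p-1}\omega_{p}^{k}=-1$. For $\eta_{0}+\eta_{1}$ I would use the partition $Z_{N}=\{0\}\cup P\cup Q\cup Z_{N}^{*}$ together with $\sum_{k=0}^{N-1}\omega_{N}^{k}=0$: subtracting the contribution of $0$, of the multiples of $p$ (which collapse to $\sum_{j=1}^{q-1}\omega_{q}^{j}=-1$ because $\omega_{N}^{p}=\omega_{q}$), and of the multiples of $q$ (similarly $-1$) yields $\eta_{0}+\eta_{1}=0-1-(-1)-(-1)=1$.

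For parts (2) and (3) I would introduce the quadratic character $\chi_{p}$ modulo $p$ (the Legendre symbol), for which $D_{0}^{(p)}=\ker\chi_{p}$, so that $\delta_{0}^{p}-\delta_{1}^{p}=\sum_{k=1}^{p-1}\chi_{p}(k)\omega_{p}^{k}=G(\chi_{p})$ is the quadratic Gauss sum. The classical evaluation $G(\chi_{p})^{2}=\chi_{p}(-1)\,p=(-1)^{(p-1)/2}p$ then gives $(\delta_{0}^{p}-\delta_{1}^{p})^{2}=p$ when $p\equiv1\pmod 4$ and $=-p$ when $p\equiv3\pmod 4$. Combining with $\delta_{0}^{p}+\delta_{1}^{p}=-1$ through the identity $4\delta_{0}^{p}\delta_{1}^{p}=(\delta_{0}^{p}+\delta_{1}^{p})^{2}-(\delta_{0}^{p}-\delta_{1}^{p})^{2}$ produces $\frac{1-p}{4}$ and $\frac{1+p}{4}$ respectively, which are exactly the claimed values.

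Part (1) follows the same template but over $Z_{N}^{*}$. Since $D_{0}=\langle g\rangle$ is the index-two subgroup of $Z_{N}^{*}$, there is a quadratic character $\psi$ on $Z_{N}^{*}$ with $\ker\psi=D_{0}$, and under the CRT isomorphism $Z_{N}^{*}\cong Z_{p}^{*}\times Z_{q}^{*}$ it factors as $\psi=\chi_{p}\chi_{q}$. Then $\eta_{0}-\eta_{1}=\sum_{k\in Z_{N}^{*}}\psi(k)\omega_{N}^{k}=G(\psi)$, and the separability of Gauss sums across the coprime moduli $p$ and $q$ writes $G(\psi)$ as $G(\chi_{p})G(\chi_{q})$ up to a prefactor (a product of values of $\chi_{p},\chi_{q}$) that squares to $1$. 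Hence $(\eta_{0}-\eta_{1})^{2}=G(\chi_{p})^{2}G(\chi_{q})^{2}=(-1)^{(p-1)/2+(q-1)/2}\,pq$, and feeding this together with $\eta_{0}+\eta_{1}=1$ into $4\eta_{0}\eta_{1}=(\eta_{0}+\eta_{1})^{2}-(\eta_{0}-\eta_{1})^{2}$ gives $\eta_{0}\eta_{1}=\frac{1+pq}{4}$.

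The main obstacle is precisely this last product, and specifically the control of the sign $(-1)^{(p-1)/2+(q-1)/2}$: the stated value $\frac{1+pq}{4}$ requires this sign to be $-1$ (equivalently $pq\equiv3\pmod 4$, which is also exactly what makes $\frac{1+pq}{4}$ an integer), so the delicate point is to pin it down from the standing congruence hypotheses on $p$ and $q$ rather than from $\mathrm{gcd}(p-1,q-1)=2$ alone. If one prefers to avoid Gauss-sum separability altogether, an alternative but more laborious route is to expand $\eta_{0}\eta_{1}=\sum_{a\in D_{0},\,b\in D_{1}}\omega_{N}^{a+b}$ and collect the exponents by residue class, reducing everything to the generalized cyclotomic numbers $(i,j)$ plus the boundary terms arising from $a+b\in\{0\}\cup P\cup Q$; this makes transparent how the value $\frac{1+pq}{4}$ is assembled, at the cost of a longer bookkeeping.
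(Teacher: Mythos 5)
The paper never proves this lemma at all: it is imported by citation from \cite{SunY}, so there is no internal argument to compare yours against, and your proof has to stand on its own. In essentials it does. The two sum identities are correct as you compute them; parts (2)--(3) via $\delta_{0}^{p}-\delta_{1}^{p}=G(\chi_{p})$ and $G(\chi_{p})^{2}=(-1)^{(p-1)/2}p$ are the standard argument; and your identification $D_{0}=\ker(\chi_{p}\chi_{q})$ is right (check: $\psi(g)=\chi_{p}(g)\chi_{q}(g)=(-1)(-1)=1$ and $\psi(x)=\chi_{p}(g)\chi_{q}(1)=-1$, so $\psi\equiv 1$ on $D_{0}=\{g^{s}\}$ and $\psi\equiv-1$ on $D_{1}=\{g^{s}x\}$).

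The ``main obstacle'' you flag at the end is not a defect of your method --- it is a genuine flaw in the lemma as transcribed, and you are right that it cannot be resolved from $\mathrm{gcd}(p-1,q-1)=2$ alone. That hypothesis excludes $p\equiv q\equiv 1\pmod 4$ but allows $p\equiv q\equiv 3\pmod 4$, e.g.\ $p=3$, $q=7$: there $\psi(-1)=\chi_{p}(-1)\chi_{q}(-1)=+1$, so $-1\in D_{0}$, both $\eta_{i}$ are real, $(\eta_{0}-\eta_{1})^{2}=+pq$, and $\eta_{0}\eta_{1}=\frac{1-pq}{4}=-5$, while the claimed value $\frac{1+pq}{4}=\frac{22}{4}$ is not even an integer. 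So part (1) is simply false under the stated hypothesis; it needs the extra assumption $p\not\equiv q\pmod 4$ (equivalently $pq\equiv 3\pmod 4$), which is exactly the standing assumption $p\equiv 1$, $q\equiv 3\pmod 4$ in every place this paper actually invokes part (1), so the applications are unharmed. Under that assumption your proof closes cleanly, and you can even bypass the separability prefactor entirely: $\psi(-1)=-1$ gives $-D_{0}=D_{1}$, hence $\eta_{1}=\overline{\eta_{0}}$ and $\eta_{0}-\eta_{1}=G(\psi)$ is purely imaginary; since $\psi=\chi_{p}\chi_{q}$ is a primitive character modulo $pq$, $|G(\psi)|^{2}=pq$, so $(\eta_{0}-\eta_{1})^{2}=-pq$ and $4\eta_{0}\eta_{1}=(\eta_{0}+\eta_{1})^{2}-(\eta_{0}-\eta_{1})^{2}=1+pq$, as claimed.
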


\section{A lower bound on the 2-adic complexity of the balanced Whiteman generalized cyclotomic sequences}
In this section, using the method of Xiong et al.\cite{Xionghai-1}, a lower bound on the 2-adic complexity of the balanced Whiteman generalized cyclotomic sequences defined in \eqref{eq1} for the case of $|q-p|<\sqrt{pq}-1$. To this end, we list the following lemmas in turn.

\begin{lemma}
Let $p\equiv1\pmod4,\ q\equiv3\pmod4$, then $$\eta_{0}^{2}(\delta_{1}^{p}\delta_{1}^{q}+\delta_{0}^{p}\delta_{0}^{q})+\eta_{1}^{2}(\delta_{1}^{p}\delta_{0}^{q}+\delta_{0}^{p}\delta_{1}^{q})=\frac{1-pq}{4}\pm\frac{pq}{2}.
$$
\end{lemma}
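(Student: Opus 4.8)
The plan is to exploit the symmetry of the two cross-terms rather than expand the product directly. First I would set
$$A=\delta_{1}^{p}\delta_{1}^{q}+\delta_{0}^{p}\delta_{0}^{q},\qquad B=\delta_{1}^{p}\delta_{0}^{q}+\delta_{0}^{p}\delta_{1}^{q},$$
so that the quantity to be evaluated is $\eta_{0}^{2}A+\eta_{1}^{2}B$. The point of this substitution is that $A$ and $B$ combine cleanly under factoring: $A+B=(\delta_{0}^{p}+\delta_{1}^{p})(\delta_{0}^{q}+\delta_{1}^{q})$ and $A-B=(\delta_{1}^{p}-\delta_{0}^{p})(\delta_{1}^{q}-\delta_{0}^{q})$. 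I would then rewrite the target in the symmetrized form
$$\eta_{0}^{2}A+\eta_{1}^{2}B=\frac{1}{2}\big[(\eta_{0}^{2}+\eta_{1}^{2})(A+B)+(\eta_{0}^{2}-\eta_{1}^{2})(A-B)\big],$$
which reduces the whole computation to evaluating four scalar quantities.

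Next I would evaluate these four pieces using Lemma~1. By part~(1), $\eta_{0}+\eta_{1}=1$ and $\eta_{0}\eta_{1}=\frac{1+pq}{4}$, whence $\eta_{0}^{2}+\eta_{1}^{2}=(\eta_{0}+\eta_{1})^{2}-2\eta_{0}\eta_{1}=\frac{1-pq}{2}$ and $\eta_{0}^{2}-\eta_{1}^{2}=\eta_{0}-\eta_{1}$. Likewise, parts~(2) and~(3) give $\delta_{0}^{p}+\delta_{1}^{p}=\delta_{0}^{q}+\delta_{1}^{q}=-1$, so that $A+B=1$, while the products $\delta_{0}^{p}\delta_{1}^{p}=\frac{1-p}{4}$ (for $p\equiv1\pmod 4$) and $\delta_{0}^{q}\delta_{1}^{q}=\frac{1+q}{4}$ (for $q\equiv3\pmod 4$) yield $(\delta_{1}^{p}-\delta_{0}^{p})^{2}=p$ and $(\delta_{1}^{q}-\delta_{0}^{q})^{2}=-q$.

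The decisive observation is that the two \emph{difference} factors are square roots of negative quantities in a matching way. Indeed $(A-B)^{2}=(\delta_{1}^{p}-\delta_{0}^{p})^{2}(\delta_{1}^{q}-\delta_{0}^{q})^{2}=-pq$, and from $(\eta_{0}-\eta_{1})^{2}=(\eta_{0}+\eta_{1})^{2}-4\eta_{0}\eta_{1}=-pq$ we get $(\eta_{0}^{2}-\eta_{1}^{2})^{2}=-pq$ as well. Hence the cross term $(\eta_{0}^{2}-\eta_{1}^{2})(A-B)$ is a product of two square roots of $-pq$, so its square equals $(-pq)(-pq)=p^{2}q^{2}$ and therefore $(\eta_{0}^{2}-\eta_{1}^{2})(A-B)=\pm pq$. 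Substituting into the symmetrized identity gives
$$\eta_{0}^{2}A+\eta_{1}^{2}B=\frac{1}{2}\Big[\frac{1-pq}{2}\pm pq\Big]=\frac{1-pq}{4}\pm\frac{pq}{2},$$
exactly as claimed.

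The only real subtlety, and the step I would treat most carefully, is this sign. Because $\eta_{0}-\eta_{1}$, $\delta_{1}^{p}-\delta_{0}^{p}$ and $\delta_{1}^{q}-\delta_{0}^{q}$ are Gauss sums whose signs depend on the choice of the primitive root $g$ and on quadratic-residue conventions, the product $(\eta_{0}^{2}-\eta_{1}^{2})(A-B)$ can legitimately equal either $+pq$ or $-pq$; this is precisely the source of the $\pm$ in the statement. I expect no further obstacle, since everything else is forced by the two quadratic relations in Lemma~1, and the remaining work is just the bookkeeping that turns those relations into the four scalar evaluations above.
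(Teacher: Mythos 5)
Your proof is correct -- every identity checks out: the symmetrization $\eta_{0}^{2}A+\eta_{1}^{2}B=\frac{1}{2}\left[(\eta_{0}^{2}+\eta_{1}^{2})(A+B)+(\eta_{0}^{2}-\eta_{1}^{2})(A-B)\right]$, the factorizations of $A\pm B$, the values $\eta_{0}^{2}+\eta_{1}^{2}=\frac{1-pq}{2}$, $(\eta_{0}-\eta_{1})^{2}=-pq$, $(\delta_{1}^{p}-\delta_{0}^{p})^{2}=p$, $(\delta_{1}^{q}-\delta_{0}^{q})^{2}=-q$, and the conclusion that the cross term is $\pm pq$ because its square is $p^{2}q^{2}$. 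However, your handling of the $\delta$-terms genuinely differs from the paper's. The paper solves the quadratics of Lemma 1 explicitly, writing $\delta_{1}^{p}=\frac{-1\pm\sqrt{p}}{2}$, $\delta_{0}^{p}=\frac{-1\mp\sqrt{p}}{2}$, $\delta_{1}^{q}=\frac{-1\pm\sqrt{q}i}{2}$, $\delta_{0}^{q}=\frac{-1\mp\sqrt{q}i}{2}$, multiplies out to get $A=\frac{1\pm\sqrt{pq}i}{2}$ and $B=\frac{1\mp\sqrt{pq}i}{2}$, and only then performs the same final split you use, namely $\eta_{0}^{2}A+\eta_{1}^{2}B=\frac{1}{2}(\eta_{0}^{2}+\eta_{1}^{2})+\frac{\sqrt{pq}i}{2}(\eta_{0}^{2}-\eta_{1}^{2})$. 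So the two arguments share their last step, but you never pick explicit roots: you work entirely with the elementary symmetric functions from Lemma 1 and settle the sign question by a single squaring argument at the end. What your route buys is freedom from the choice of which root is labelled $\delta_{0}$ versus $\delta_{1}$ (and from the $\pm/\mp$ bookkeeping the paper must carry through each line), plus a transparent explanation of why the $\pm$ in the statement is irreducible. What the paper's explicit computation buys in exchange is the sign correlation it exhibits along the way -- $A$ and $B$ appear as the conjugate pair $\frac{1\pm\sqrt{pq}i}{2}$ -- which your squaring step deliberately discards. Since neither proof determines the actual sign, for the lemma as stated the two arguments are equally strong, and yours is arguably the cleaner of the two.
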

\begin{proof}
 By lemma \ref{SunY} (2)-(3), we have $$\delta_{1}^{p}=\frac{-1\pm\sqrt{p}}{2},\ \delta_{0}^{p}= \frac{-1\mp\sqrt{p}}{2},$$
 $$\delta_{1}^{q}=\frac{-1\pm\sqrt{q}i}{2},\ \delta_{0}^{q}= \frac{-1\mp\sqrt{q}i}{2}.$$
Then, by direct calculation, we get
 $$\delta_{1}^{p}\delta_{1}^{q}+\delta_{0}^{p}\delta_{0}^{q}=\frac{1\pm\sqrt{pq}i}{2},\ \delta_{1}^{p}\delta_{0}^{q}+\delta_{0}^{p}\delta_{1}^{q}=\frac{1\mp\sqrt{pq}i}{2}.$$
Again, by Lemma \ref{SunY} (1) and direct calculation, we can further obtain
\begin{align*}
&\eta_{0}^{2}(\delta_{1}^{p}\delta_{1}^{q}+\delta_{0}^{p}\delta_{0}^{q})+\eta_{1}^{2}(\delta_{1}^{p}\delta_{0}^{q}+\delta_{0}^{p}\delta_{1}^{q})\\
=&\eta_{0}^{2}\times\frac{1+\sqrt{pq}i}{2}+\eta_{1}^{2}\times\frac{1-\sqrt{pq}i}{2}\\
=&\frac{1}{2}(\eta_{0}^{2}+\eta_{1}^{2})+\frac{\sqrt{pq}i}{2}(\eta_{0}^{2}-\eta_{1}^{2})\\
=&\frac{1-pq}{4}\pm\frac{pq}{2}.
\end{align*}
\end{proof}

\begin{lemma}\label{main method-2}\cite{Davis},\cite{Xionghai-1}
Let $s=\{s_{i}\}_{i=0}^{N-1}$ be a binary sequence of period $N$ and denote $S(x)=\sum\limits_{i=0}^{N-1}s_{i}x^{i}\in Z[x]$. Suppose $A=(a_{i,j})_{N\times N}$ is the matrix defined by $a_{i,j}=s_{i-j\pmod N}$. Viewing $A$ as a matrix over the rational field $\mathbb{Q}$. Then
\begin{itemize}
\item[(1)] $\mathrm{det}(A)=\prod_{a=0}^{N-1}S(\omega_N^a)$;
\item[(2)] If the determinant of $A$ satisfies $\mathrm{det}(A)\neq0$, then
\begin{align}
 \mathrm{gcd}\big(S(2),2^N-1\big)\mid \mathrm{gcd}\big(\mathrm{det}(A),2^N-1\big).\label{method}
\end{align}
\end{itemize}
\end{lemma}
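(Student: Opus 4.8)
The plan is to read part~(1) as the classical circulant--determinant formula and part~(2) as an integrality argument resting on a single well-chosen vector. For~(1), note that $A$ is circulant, since $a_{i,j}$ depends only on $i-j\bmod N$. For each $a\in\{0,1,\dots,N-1\}$ I would set $\mathbf v_a=(\omega_N^{0\cdot a},\omega_N^{1\cdot a},\dots,\omega_N^{(N-1)a})^{\mathsf T}$ and compute directly: reindexing the sum by $k\equiv i-j\pmod N$ and using $\omega_N^{N}=1$ gives $(A\mathbf v_a)_i=\sum_{j}s_{i-j}\omega_N^{aj}=\omega_N^{ai}\sum_{k}s_k\omega_N^{-ak}=\omega_N^{ai}\,S(\omega_N^{-a})$, so $A\mathbf v_a=S(\omega_N^{-a})\,\mathbf v_a$. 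The $\mathbf v_a$ are the columns of a Vandermonde (Fourier) matrix, hence linearly independent, so $A$ is diagonalizable over $\mathbb C$ with eigenvalues $S(\omega_N^{-a})$. Therefore $\mathrm{det}(A)=\prod_{a=0}^{N-1}S(\omega_N^{-a})=\prod_{a=0}^{N-1}S(\omega_N^{a})$, the last equality by the substitution $a\mapsto -a\bmod N$; this proves~(1).

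The substance is~(2). Since $\gcd(S(2),2^N-1)$ already divides $2^N-1$, it suffices to prove that $d:=\gcd(S(2),2^N-1)$ divides $\mathrm{det}(A)$. The key idea is to exhibit a vector that $A$ scales by $S(2)$ modulo $2^N-1$. I would take $\mathbf u=(2^{N-1},2^{N-2},\dots,2,1)^{\mathsf T}$, that is $u_j=2^{N-1-j}$. Writing $(A\mathbf u)_i=\sum_{j}s_{(i-j)\bmod N}\,2^{N-1-j}$, substituting $k\equiv i-j\pmod N$, and using $2^{N}\equiv 1\pmod{2^N-1}$ (so that each power of $2$ depends only on its exponent modulo $N$) collapses the sum to $(A\mathbf u)_i\equiv 2^{N-1-i}\sum_{k}s_k2^{k}=2^{N-1-i}S(2)=S(2)\,u_i\pmod{2^N-1}$. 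Thus $A\mathbf u\equiv S(2)\,\mathbf u\pmod{2^N-1}$, exhibiting $\mathbf u$ as a near-eigenvector of $A$ with ``eigenvalue'' $S(2)$.

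To finish I would reduce this identity modulo $d$. Since $d\mid 2^N-1$ and $d\mid S(2)$, it follows that $A\mathbf u\equiv 0\pmod d$, so $A\mathbf u=d\,\mathbf w$ for some $\mathbf w\in\mathbb Z^N$. Multiplying by the integer adjugate and using $\mathrm{adj}(A)\,A=\mathrm{det}(A)\,I$ gives $\mathrm{det}(A)\,\mathbf u=\mathrm{adj}(A)(A\mathbf u)=d\,\mathrm{adj}(A)\,\mathbf w$, so every coordinate of $\mathrm{det}(A)\,\mathbf u$ is divisible by $d$. The last coordinate of $\mathbf u$ equals $1$, whence $d\mid\mathrm{det}(A)$; combined with $d\mid 2^N-1$ this yields $d\mid\gcd(\mathrm{det}(A),2^N-1)$, which is exactly \eqref{method}. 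I expect the main obstacle to be twofold: first, guessing the correct vector $\mathbf u$ and verifying the near-eigenvector relation $A\mathbf u\equiv S(2)\mathbf u\pmod{2^N-1}$, which is the one genuinely creative step; and second, passing from this relation to a divisibility of the integer $\mathrm{det}(A)$ when $d$ need not be prime. The latter is precisely why the adjugate identity is the right tool, since a naive kernel or rank argument would require $d$ to be prime, whereas $\mathrm{adj}(A)\,A=\mathrm{det}(A)\,I$ together with the unit coordinate of $\mathbf u$ handles arbitrary $d$ cleanly.
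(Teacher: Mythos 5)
Your proposal is correct, and it is essentially the argument behind the sources this lemma is quoted from: the paper itself gives no proof (part (1) is the classical circulant-determinant formula cited to Davis, and part (2) is cited to Xiong et al., whose proof likewise treats the power-of-2 vector as an eigenvector of $A$ modulo $2^{N}-1$ with eigenvalue $S(2)$ and then extracts the divisibility of $\det(A)$). Your use of the adjugate identity together with the coordinate of $\mathbf{u}$ equal to $1$ is a clean way to handle composite $d=\gcd\big(S(2),2^{N}-1\big)$, and it even shows the hypothesis $\det(A)\neq 0$ is not needed for the divisibility \eqref{method} to hold.
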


From Lemma \ref{main method-2}, to get a lower bound of the 2-adic complexity of a binary sequence we need to determine the value of $\mathrm{det}(A)$ and the value of $\mathrm{det}(A)$ can be obtained by $\prod_{a=0}^{N-1}S(\omega_N^a)$. The following three lemmas provide the desired results.

\begin{lemma}\cite{Yan}\label{Yan}
For any $a\in Z_{N}$ and $B\subset Z_{N}$, denote $aB=\{ab\pmod N|b\in B\}$, then there are following properties .
\begin{itemize}
\item[(1)]For every fixed $a\in D_{i}$, $aP=P,aQ=Q$ and $aD_{j}=D_{(i+j)\pmod2}$, where $i,j=0,1$.
\item[(2)]For every fixed $a\in P$, if $b$ runs through $D_{i}$, then $ab$ runs though $P$ every elements $\frac{p-1}{2}$ times, and $aP=P,aQ=R$.
\item[(3)]For every fixed $a\in Q$, if $b$ through $D_{i}$, then $ab$ runs through $Q$ every elements $\frac{q-1}{2}$ times. and $aQ=Q,aP=R$.
\item[(4)]For every fixed $a\pmod p \in D_{i}^{(p)}$, then $aD_{j}^{(p)}=D_{i+j\pmod2}^{(p)},i,j=0,1$.
\item[(5)]For every fixed $a\pmod p \in D_{i}^{(q)}$, then $aD_{j}^{(q)}=D_{i+j\pmod2}^{(q)},i,j=0,1$.
\end{itemize}
\end{lemma}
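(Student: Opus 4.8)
The plan is to work through the Chinese Remainder Theorem isomorphism $Z_N^{*}\cong Z_p^{*}\times Z_q^{*}$ and exploit the group-theoretic structure of the classes. First I would record the key structural fact underlying the whole lemma: since $g$ is a common primitive root and $\mathrm{gcd}(p-1,q-1)=2$, the order of $g$ in $Z_N^{*}$ is $\mathrm{lcm}(p-1,q-1)=(p-1)(q-1)/2=e$, so $D_0=\langle g\rangle$ is a cyclic subgroup of index $2$ in $Z_N^{*}$. A short congruence check (the simultaneous conditions $s\equiv1\pmod{p-1}$ and $s\equiv0\pmod{q-1}$ are inconsistent modulo $\mathrm{gcd}(p-1,q-1)=2$) shows $x\notin D_0$, hence $D_1=xD_0$ is the nontrivial coset. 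Under CRT I would identify $P$ with $\{0\}\times Z_q^{*}$ and $Q$ with $Z_p^{*}\times\{0\}$, since $P$ consists of the residues that are $\equiv0\pmod p$ and $\not\equiv0\pmod q$, and symmetrically for $Q$.

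With this in hand, part (1) is immediate: for $a\in D_i$ the identity $aD_j=D_{(i+j)\bmod2}$ is just the coset multiplication rule in the quotient $Z_N^{*}/D_0\cong Z_2$, while $aP=P$ and $aQ=Q$ follow because multiplication by the unit $a$ fixes the coordinate that is zero and permutes the nonzero coordinate bijectively. Parts (4) and (5) are the same coset argument carried out one prime at a time: $D_0^{(p)}$ and $D_0^{(q)}$ are the index-$2$ subgroups of quadratic residues in $Z_p^{*}$ and $Z_q^{*}$, so multiplication by an element of $D_i^{(p)}$ sends $D_j^{(p)}$ to $D_{(i+j)\bmod2}^{(p)}$, and likewise for $q$.

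The substantive parts are (2) and (3), which are symmetric, so I would prove (2) and obtain (3) by interchanging $p$ and $q$. For $a\in P$ the product $ab$ with $b\in Q$ is a multiple of $pq=N$, hence $\equiv0\pmod N$; this shows $aQ=\{0\}$, so in particular the set $R$ appearing in the statement (undefined in this excerpt) is just $\{0\}$. The claim that $ab$ runs through $P$ exactly $\frac{p-1}{2}$ times as $b$ runs through $D_i$ is where the real work lies, and the key is the distribution of $b\bmod q$: since $ab\equiv0\pmod p$ always and $a\bmod q$ is a unit, the value $ab\bmod N$ is determined by $(a\bmod q)(b\bmod q)$, so I need the projection $D_i\to Z_q^{*}$ to be exactly $\frac{p-1}{2}$-to-one onto $Z_q^{*}$.

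The main obstacle is precisely this uniform-fibre count, which I would establish as follows. The projection of $D_0=\langle g\rangle$ to $Z_q^{*}$ sends $g^{s}$ to $g^{s}\bmod q$, a map periodic in $s$ with period equal to the order of $g$ modulo $q$, namely $q-1$. Because $s$ ranges over the $e=(q-1)\cdot\frac{p-1}{2}$ consecutive integers $0,1,\dots,e-1$ and $e$ is a multiple of $q-1$, each residue of $s$ modulo $q-1$ occurs exactly $\frac{p-1}{2}$ times, so each element of $Z_q^{*}$ is hit exactly $\frac{p-1}{2}$ times; the coset $D_1=xD_0$ projects identically since $x\equiv1\pmod q$. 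Multiplication by the fixed unit $a\bmod q$ merely permutes $Z_q^{*}$, so the fibre size $\frac{p-1}{2}$ is preserved, and combining with $ab\equiv0\pmod p$ gives that $ab$ sweeps out $P$ with uniform multiplicity $\frac{p-1}{2}$; the identity $aP=P$ follows from the same bijectivity of multiplication by $a\bmod q$ on $Z_q^{*}$. The only care needed throughout is to track the two CRT coordinates separately so that ``$\equiv0\pmod p$'' and ``unit modulo $q$'' are handled independently; everything else is routine.
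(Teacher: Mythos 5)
The paper never proves this lemma: it is quoted from the thesis [Yan] and used as a black box, so there is no internal proof to compare yours against — your proposal fills a gap the paper leaves to a citation. That said, your argument is correct and self-contained. The identification of $D_0=\langle g\rangle$ as an index-$2$ subgroup (with the parity obstruction $s\equiv1\pmod{p-1}$, $s\equiv0\pmod{q-1}$ showing $x\notin D_0$, hence $D_1=xD_0$), the CRT picture $P\leftrightarrow\{0\}\times Z_q^{*}$, $Q\leftrightarrow Z_p^{*}\times\{0\}$, the coset-multiplication argument for (1), (4), (5), and the fibre count $e/(q-1)=\frac{p-1}{2}$ for (2) are all sound; your reading of the undefined symbol $R$ as $\{0\}$ is also the standard convention in this literature and is what makes $aQ=R$ true. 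One detail deserves a correction in wording: parts (2) and (3) are \emph{not} literally symmetric under swapping $p$ and $q$, because $x\equiv1\pmod q$ while $x\equiv g\pmod p$. Your uniform-fibre argument for $D_1$ invokes ``$D_1=xD_0$ projects identically since $x\equiv1\pmod q$''; in the mirrored case (3) the projection of $D_1$ to $Z_p^{*}$ is instead the projection of $D_0$ multiplied by the unit $g\bmod p$ (equivalently, exponents shifted by one), which still gives uniform fibres of size $\frac{q-1}{2}$ by your own observation that multiplication by a fixed unit permutes $Z_p^{*}$. So the conclusion stands, but ``obtain (3) by interchanging $p$ and $q$'' needs that one-line adjustment rather than an appeal to literal symmetry.
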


In convenience, we denote
\begin{align*}
D_{00}=&\{g^{2t}\pmod N:t=0,1,\cdots,\frac{e}{2}-1\},\\
D_{01}=&\{g^{2t+1}\pmod N:t=0,1,\cdots,\frac{e}{2}-1\},\\
D_{10}=&\{g^{2t}x\pmod N:t=0,1,\cdots,\frac{e}{2}-1\},\\
D_{11}=&\{g^{2t+1}x\pmod N:t=0,1,\cdots,\frac{e}{2}-1\},
\end{align*}
i.e., $D_{0}=D_{00}\cup D_{01}$ and $D_{1}=D_{10}\cup D_{11}$.
\begin{lemma}\label{Sproduct}
Let $\{s_{i}\}_{i=0}^{N-1}$ be the sequence defined in Eq.(\ref{eq1}), then
\begin{center}
$S(\omega_{N}^{a})=\left\{\begin{array}{cc}
                            \frac{pq-1}{2} & a\in R \\
                            \delta_{1}^{p}  & \ \ \ \ \ \ a\in D_{0}^{(q)}p \\
                            \delta_{0}^{p}  & \ \ \ \ \ \ a\in D_{1}^{(q)}p \\
                            \delta_{1}^{q}  & \ \ \ \ \ \ a\in D_{0}^{(p)}q \\
                            \delta_{0}^{q}  & \ \ \ \ \ \ a\in D_{1}^{(p)}q \\
                            \eta_{1}+\delta_{0}^{q}+\delta_{0}^{p}  & \ \ \ a\in D_{01} \\
                           \eta_{1}+\delta_{1}^{q}+\delta_{1}^{p}   &\ \  \ a\in D_{00} \\
                            \eta_{0}+\delta_{0}^{q}+\delta_{1}^{p}  & \ \ \  a\in D_{10} \\
                            \eta_{0}+\delta_{1}^{q}+\delta_{0}^{p}  &  \ \ \ a\in D_{11}
                          \end{array}
\right.$
\end{center}
\end{lemma}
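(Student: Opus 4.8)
The plan is to evaluate $S(\omega_N^a)=\sum_{i=0}^{N-1}s_i\,\omega_N^{ai}=\sum_{i\in C_1}\omega_N^{ai}$ straight from the support $C_1=D_1\cup D_1^{(p)}q\cup D_1^{(q)}p$, writing $S(\omega_N^a)=\Sigma_1+\Sigma_2+\Sigma_3$ where
\[
\Sigma_1=\sum_{i\in D_1}\omega_N^{ai},\qquad
\Sigma_2=\sum_{i\in D_1^{(p)}q}\omega_N^{ai},\qquad
\Sigma_3=\sum_{i\in D_1^{(q)}p}\omega_N^{ai},
\]
and then evaluating the three pieces according to whether $a$ is $0$, a nonzero multiple of $p$, a nonzero multiple of $q$, or a unit in $Z_N$ (these four categories partition $Z_N$).

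First I would reduce the two ``thin'' sums to classical Gaussian periods. Substituting $i=i_0q$ and using $\omega_N^{q}=\omega_p$ gives $\Sigma_2=\sum_{i_0\in D_1^{(p)}}\omega_p^{a i_0}$, a sum of $p$th roots of unity that depends only on $a\bmod p$; symmetrically $\omega_N^{p}=\omega_q$ turns $\Sigma_3$ into $\sum_{i_0\in D_1^{(q)}}\omega_q^{a i_0}$, depending only on $a\bmod q$. When $a$ is invertible modulo $p$, Lemma \ref{Yan}(4) identifies the translate $a\,D_1^{(p)}$ with $D_0^{(p)}$ or $D_1^{(p)}$ according to the class of $a\bmod p$, so $\Sigma_2$ equals one of $\delta_0^{p},\delta_1^{p}$; Lemma \ref{Yan}(5) does the same for $\Sigma_3$, giving one of $\delta_0^{q},\delta_1^{q}$. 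When instead $p\mid a$ (respectively $q\mid a$) the relevant exponent vanishes and the sum degenerates to the cardinality $\tfrac{p-1}{2}$ (respectively $\tfrac{q-1}{2}$).

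Next I would treat $\Sigma_1$ via Lemma \ref{Yan}(1)--(3). If $a$ is a unit, part (1) gives $a\,D_1=D_1$ or $D_0$, so $\Sigma_1=\eta_1$ or $\eta_0$; if $a\in P$ or $a\in Q$, parts (2)--(3) show that $a\,D_1$ sweeps out $P$ (respectively $Q$) with constant multiplicity $\tfrac{p-1}{2}$ (respectively $\tfrac{q-1}{2}$), whence $\Sigma_1$ equals that multiplicity times $\sum_{m\in P}\omega_N^{m}=-1$ (respectively $\sum_{m\in Q}\omega_N^{m}=-1$), i.e. $-\tfrac{p-1}{2}$ or $-\tfrac{q-1}{2}$. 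The decisive simplification is that in each zero-divisor case exactly one of $\Sigma_2,\Sigma_3$ degenerates to a constant of the same magnitude as this $\Sigma_1$, and the two cancel, leaving a single Gaussian period; while for $a\in R=\{0\}$ all three sums are cardinalities with total $|C_1|=\tfrac{(p-1)(q-1)}{2}+\tfrac{p-1}{2}+\tfrac{q-1}{2}=\tfrac{pq-1}{2}$.

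It remains to split the unit case into $D_{00},D_{01},D_{10},D_{11}$, which I would do through the exponent parities of the common primitive root $g$: since $p-1$ and $q-1$ are even, the class of $g^{s}$ modulo $p$ (respectively modulo $q$) is governed by the parity of $s$, while multiplication by $x$ shifts this parity in exactly one of the two moduli (the one in which $x\equiv g$) and fixes it in the other. Reading off the resulting classes of $a\bmod p$ and $a\bmod q$ then determines which of $\delta_0^{p},\delta_1^{p}$ and which of $\delta_0^{q},\delta_1^{q}$ accompany $\eta_0$ or $\eta_1$ in each of the four rows. I expect the main obstacle to be precisely this bookkeeping: keeping the local multiplicative actions on $D_i^{(p)}$ and $D_i^{(q)}$ consistent with the global action on $D_0,D_1$, and checking that the cancellations in the $P$- and $Q$-cases leave no stray constant behind.
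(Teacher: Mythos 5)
Your plan is exactly the proof the paper intends: the paper's own proof of Lemma~\ref{Sproduct} is a single sentence deferring to Lemma~\ref{Yan}, and your sketch carries out precisely that case analysis. The analytic core is sound: the reductions via $\omega_N^q=\omega_p$ and $\omega_N^p=\omega_q$, the evaluation $\Sigma_1=\eta_0$ or $\eta_1$ for units via Lemma~\ref{Yan}(1), the values $\Sigma_1=-\frac{p-1}{2}$ (resp.\ $-\frac{q-1}{2}$) for $a\in P$ (resp.\ $a\in Q$) via parts (2)--(3) together with $\sum_{m\in P}\omega_N^m=\sum_{m\in Q}\omega_N^m=-1$, the cancellation against the degenerate thin sum, and $S(1)=|C_1|=\frac{pq-1}{2}$ are all correct.

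However, if you complete the bookkeeping honestly you will \emph{not} land on the table as printed, and you should say so rather than force agreement. First, for $a\in P$ the surviving sum is $\Sigma_3$, a sum of $q$th roots of unity, so $S(\omega_N^a)\in\{\delta_0^q,\delta_1^q\}$, and symmetrically $S(\omega_N^a)\in\{\delta_0^p,\delta_1^p\}$ for $a\in Q$: rows 2--5 of the statement have the superscripts $p$ and $q$ transposed (a misprint). The same transposition affects the $D_{10}$ and $D_{11}$ rows: since $x\equiv g\pmod p$ and $x\equiv 1\pmod q$, an element $a=g^{2t}x\in D_{10}$ satisfies $a\bmod p\in D_1^{(p)}$ and $a\bmod q\in D_0^{(q)}$, so your parity argument yields $\eta_0+\delta_0^p+\delta_1^q$, not the printed $\eta_0+\delta_0^q+\delta_1^p$. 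That your corrected table is the right one is confirmed downstream: in Lemma~\ref{mainlemma} the $p-1$ nonzero multiples of $q$ contribute $(\delta_0^p\delta_1^p)^{\frac{p-1}{2}}=\big(\tfrac{1-p}{4}\big)^{\frac{p-1}{2}}$ and the $q-1$ multiples of $p$ contribute $\big(\tfrac{1+q}{4}\big)^{\frac{q-1}{2}}$, which matches your version, not the printed one. Second, there is a genuine indeterminacy your ``reading off'' step glosses over in rows 2--5: for $a=xp$ with $x\in D_0^{(q)}$, the class of $a$ modulo $q$ is that of $x$ shifted by the class of $p\bmod q$, so whether you obtain $\delta_1^q$ or $\delta_0^q$ depends on the quadratic character of $p$ modulo $q$ (equal to that of $q$ modulo $p$ by reciprocity, since $p\equiv1\pmod4$), which the hypotheses do not determine --- already for $(p,q)=(5,3)$ from the paper's own table $p$ is a nonresidue modulo $q$ and the labels flip. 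Fortunately none of this propagates: only the products $\delta_0^p\delta_1^p$, $\delta_0^q\delta_1^q$ and the unordered multiset of the four unit-case values enter $\det(A)$, so Lemma~\ref{mainlemma} and both theorems are unaffected; but your proof, correctly finished, establishes the corrected statement rather than the printed one.
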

\begin{proof}
We need only apply the results of Lemma \ref{Yan} to the definition of $S(\omega_{N}^{a})$ according to the set to which $a$ belongs. For the sake of brevity, here we omit the details.
\end{proof}
\begin{lemma}\label{mainlemma}
Let $s=(s_{0},s_{1},\cdots,s_{N-1})$ be the binary sequence defined in Eq.(\ref{eq1}) and $A=(a_{i,j})_{N\times N}$ be the matrix defined by $a_{i,j}=s_{(i-j)\pmod N}$. Then
\begin{center}
$\mathrm{det}(A)=(\frac{pq-1}{2})(\frac{1-p}{4})^{\frac{p-1}{2}}(\frac{1+q}{4})^{\frac{q-1}{2}}\Delta^{\frac{e}{2}},$
\end{center}
where $\Delta=[\frac{(1+pq)^{2}}{16}+(\frac{\pm1-d}{2})pq+d^{2}+\frac{3}{2}d+\frac{1}{2}]$ and $d=\frac{q-p-2}{4}$.
\end{lemma}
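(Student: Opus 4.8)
The goal is to evaluate $\mathrm{det}(A)=\prod_{a=0}^{N-1}S(\omega_N^a)$ using Lemma~\ref{main method-2}(1), with the values of $S(\omega_N^a)$ supplied by Lemma~\ref{Sproduct}. The natural approach is to partition the index set $\{0,1,\dots,N-1\}=Z_N$ into the nine classes appearing in Lemma~\ref{Sproduct}, namely $R=\{0\}\cup P\cup Q$ (on which $S=\frac{pq-1}{2}$), the four singleton-valued classes $D_0^{(q)}p,\,D_1^{(q)}p,\,D_0^{(p)}q,\,D_1^{(p)}q$, and the four classes $D_{00},D_{01},D_{10},D_{11}$. Since the product factors over these classes, I would first record the cardinalities: $|D_i^{(q)}p|=\frac{q-1}{2}$, $|D_i^{(p)}q|=\frac{p-1}{2}$, and $|D_{ij}|=\frac{e}{2}$ for each of the four generalized cyclotomic pieces.

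\smallskip
\textbf{The easy factors.}
The class $R$ contributes a single factor $\frac{pq-1}{2}$ (the value at $a=0$ together with all of $P\cup Q$; one must check the count so this appears with the right multiplicity—here $S$ is constant on $R$, and only the $a=0$ term is intended to give the leading $\frac{pq-1}{2}$, so I would verify carefully whether $P$ and $Q$ indices also land in $R$ and collapse accordingly). The two $p$-type classes pair up as $\delta_0^{q}\delta_1^{q}$ raised to the power $\frac{p-1}{2}$, which by Lemma~\ref{SunY}(3) (using $q\equiv3\pmod4$) equals $\left(\frac{1+q}{4}\right)^{\frac{p-1}{2}}$; wait—this must be matched to the correct prime. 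The classes $D_0^{(p)}q,D_1^{(p)}q$ carry values $\delta_1^{q},\delta_0^{q}$, so their product over all elements gives $\left(\delta_0^{q}\delta_1^{q}\right)^{\frac{p-1}{2}}=\left(\frac{1+q}{4}\right)^{\frac{p-1}{2}}$, and the classes $D_0^{(q)}p,D_1^{(q)}p$ give $\left(\delta_0^{p}\delta_1^{p}\right)^{\frac{q-1}{2}}=\left(\frac{1-p}{4}\right)^{\frac{q-1}{2}}$ by Lemma~\ref{SunY}(2). Comparing with the claimed exponents in the statement, I would reconcile which prime sits in which factor (the statement writes $\left(\frac{1-p}{4}\right)^{\frac{p-1}{2}}\left(\frac{1+q}{4}\right)^{\frac{q-1}{2}}$), and track the pairing exactly.

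\smallskip
\textbf{The main obstacle: the $\Delta^{e/2}$ factor.}
The substantial part is the contribution of the four classes $D_{00},D_{01},D_{10},D_{11}$, each of size $\frac{e}{2}$. As $a$ ranges over a fixed $D_{ij}$, the value $S(\omega_N^a)$ is \emph{not} constant—Lemma~\ref{Sproduct} gives one formula per class, but the underlying Gauss sums $\eta_i,\delta^p,\delta^q$ depend on which conjugate one lands in as $a$ varies. The key idea is to group the four classes into conjugate-closed quadruples and take the product over one full Galois orbit, so that the four linear expressions $\eta_1+\delta_0^{q}+\delta_0^{p}$, $\eta_1+\delta_1^{q}+\delta_1^{p}$, $\eta_0+\delta_0^{q}+\delta_1^{p}$, $\eta_0+\delta_1^{q}+\delta_0^{p}$ multiply to a rational number. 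I would compute this fourfold product by expanding it as a product of two quadratics: first pair the two $\eta_1$-terms and the two $\eta_0$-terms, using $\delta_0^{p}+\delta_1^{p}=-1$ and the product formulas, then combine the resulting two expressions using $\eta_0+\eta_1=1$ and $\eta_0\eta_1=\frac{1+pq}{4}$ from Lemma~\ref{SunY}(1). The previous lemma's identity $\eta_0^2(\delta_1^{p}\delta_1^{q}+\delta_0^{p}\delta_0^{q})+\eta_1^2(\delta_1^{p}\delta_0^{q}+\delta_0^{p}\delta_1^{q})=\frac{1-pq}{4}\pm\frac{pq}{2}$ is precisely engineered to handle the cross-terms that appear here. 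The anticipated hard point is the bookkeeping of the $\pm$ and $\mp$ ambiguities coming from the two square-root choices in $\delta^p$ and $\delta^q$: I must confirm that after multiplying over a complete orbit these sign choices combine consistently to yield the single expression $\Delta=\frac{(1+pq)^2}{16}+\frac{\pm1-d}{2}pq+d^2+\frac{3}{2}d+\frac{1}{2}$ with $d=\frac{q-p-2}{4}$, and that the exponent is exactly $\frac{e}{2}$ (i.e.\ the orbit-product is taken $\frac{e}{2}$ times as the quadruple sweeps out all of $D_{00}\cup D_{01}\cup D_{10}\cup D_{11}$). Finally, multiplying the easy factors and $\Delta^{e/2}$ together and simplifying yields the stated determinant.
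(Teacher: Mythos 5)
Your overall strategy coincides with the paper's: write $\mathrm{det}(A)=\prod_{a=0}^{N-1}S(\omega_N^a)$ via Lemma~\ref{main method-2}(1), evaluate the product class by class using Lemma~\ref{Sproduct}, reduce the $P$- and $Q$-contributions with Lemma~\ref{SunY}, and obtain $\Delta$ as the product of the four expressions attached to $D_{00},D_{01},D_{10},D_{11}$, expanded as two quadratics and simplified with $\eta_0+\eta_1=1$, $\eta_0\eta_1=\frac{1+pq}{4}$, $\delta_0^p+\delta_1^p=-1$ and the identity of the preceding lemma; this is exactly the computation the paper carries out (it sets $\Delta=(A_1+B_1)(A_0+B_0)$ with $A_i=\eta_i^2-2\eta_i+\frac{q-p+2}{4}$ and $B_i=\delta_0^p\delta_i^q+\delta_1^p\delta_{(i+1)\bmod 2}^q$). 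However, one of your hedges rests on a misconception: $S(\omega_N^a)$ \emph{is} constant on each $D_{ij}$, because for $a\in D_{ij}$ the residues $a\bmod p$ and $a\bmod q$ lie in fixed classes $D_k^{(p)}$, $D_l^{(q)}$, so by Lemma~\ref{Yan} each of the three sums making up $S(\omega_N^a)$ equals a fixed Gauss period. No Galois-orbit grouping is needed: each of the four classes, of size $\frac{e}{2}$, contributes its constant value to the power $\frac{e}{2}$, and the product of the four constants is $\Delta$. (Also $R=\{0\}$, not $\{0\}\cup P\cup Q$, so the factor $\frac{pq-1}{2}$ occurs exactly once.)

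The genuine gap is the exponent reconciliation that you flag and then postpone, because it is the crux of matching the stated formula. Taking Lemma~\ref{Sproduct} at face value, $P=D_0^{(q)}p\cup D_1^{(q)}p$ (of size $q-1$) carries the values $\delta_1^p,\delta_0^p$ and $Q$ (of size $p-1$) carries $\delta_1^q,\delta_0^q$, which yields $\bigl(\tfrac{1-p}{4}\bigr)^{\frac{q-1}{2}}\bigl(\tfrac{1+q}{4}\bigr)^{\frac{p-1}{2}}$ --- \emph{not} the claimed $\bigl(\tfrac{1-p}{4}\bigr)^{\frac{p-1}{2}}\bigl(\tfrac{1+q}{4}\bigr)^{\frac{q-1}{2}}$. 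The resolution is that Lemma~\ref{Sproduct} as printed has the superscripts $p$ and $q$ interchanged on these classes: for $a=xp\in D_i^{(q)}p$ one has $\omega_N^a=\omega_q^x$, a primitive $q$-th root of unity, so $S(\omega_N^a)$ lies in $\mathbb{Q}(\omega_q)$ and must equal one of $\delta_0^q,\delta_1^q$ (the $D_1$-part and the $D_1^{(p)}q$-part of the sum cancel to $-\frac{p-1}{2}+\frac{p-1}{2}=0$ by Lemma~\ref{Yan}(2)), and symmetrically for $a\in Q$. Hence $P$ contributes $(\delta_0^q\delta_1^q)^{\frac{q-1}{2}}=\bigl(\tfrac{1+q}{4}\bigr)^{\frac{q-1}{2}}$ and $Q$ contributes $(\delta_0^p\delta_1^p)^{\frac{p-1}{2}}=\bigl(\tfrac{1-p}{4}\bigr)^{\frac{p-1}{2}}$, which is the statement. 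Since you explicitly left this pairing undecided, your proposal as written derives a formula with the two exponents swapped; closing that gap requires the recomputation above, not mere bookkeeping.
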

\begin{proof}
By Lemmas \ref{main method-2} and \ref{Sproduct}, we have $\mathrm{det}(A)=\prod\limits_{a=0}^{N-1}S(\omega_{N}^{a})=(\frac{pq-1}{2})(\delta_{1}^{q}\delta_{0}^{q})^{\frac{q-1}{2}}(\delta_{1}^{p}\delta_{0}^{p})^{\frac{p-1}{2}}\Delta^{\frac{e}{2}}$,
here we temporarily denote $\Delta=(\eta_{1}+\delta_{1}^{q}+\delta_{1}^{p})(\eta_{1}+\delta_{0}^{q}+\delta_{0}^{p})(\eta_{0}+\delta_{0}^{q}+\delta_{1}^{p})(\eta_{0}+\delta_{1}^{q}+\delta_{0}^{p})$. By Lemmas 4 and 5, we know that $\delta_{1}^{q}\delta_{0}^{q}=\frac{1+q}{4}$ and $\delta_{1}^{p}\delta_{0}^{p}=\frac{1-p}{4}$. Next, we focus on $\Delta$.

\begin{align*}
\Delta=&(\eta_{1}+\delta_{1}^{q}+\delta_{1}^{p})(\eta_{1}+\delta_{0}^{q}+\delta_{0}^{p})(\eta_{0}+\delta_{1}^{q}+\delta_{0}^{p})(\eta_{0}+\delta_{0}^{q}+\delta_{1}^{p})\\
=&[\eta_{1}^{2}+(\delta_{0}^{q}+\delta_{0}^{p}+\delta_{1}^{q}+\delta_{1}^{p})\eta_{1}+(\delta_{1}^{q}+\delta_{1}^{p})(\delta_{0}^{q}+\delta_{0}^{p})][\eta_{0}^{2}+(\delta_{0}^{q}+\delta_{0}^{p}+\delta_{1}^{q}+\delta_{1}^{p})\eta_{0}\\
&+(\delta_{0}^{q}+\delta_{1}^{p})(\delta_{1}^{q}+\delta_{0}^{p})]\\
=&[\eta_{1}^{2}-2\eta_{1}+\frac{2+q-p}{4}+(\delta_{1}^{q}\delta_{0}^{p}+\delta_{1}^{p}\delta_{0}^{q})][\eta_{0}^{2}-2\eta_{0}+\frac{2+q-p}{4}+(\delta_{0}^{q}\delta_{0}^{p}+\delta_{1}^{q}\delta_{1}^{p})]\\
=&(A_{1}+B_{1})(A_{0}+B_{0})\\
=&A_{1}A_{0}+A_{0}B_{1}+A_{1}B_{0}+B_{1}B_{0},
\end{align*}
 where $A_{i}=\eta_{i}^{2}-2\eta_{i}+\frac{q-p+2}{4}$ and $B_{i}=\delta_{0}^{p}\delta_{i}^{q}+\delta_{1}^{p}\delta_{i+1\mod2}^{q}(i=0,1).$
Following, we analyze $A_{1}A_{0}, A_{0}B_{1}+A_{1}B_{0}$ and $B_{1}B_{0}$, respectively.\\

 \begin{align*}
A_{1}A_{0}=&[\eta_{1}^{2}-2\eta_{1}+\frac{q-p+2}{4}][\eta_{0}^{2}-2\eta_{0}+\frac{q-p+2}{4}]\ (by\ Lemma\ 3)\\
=&[\eta_{0}^{2}-1+\frac{q-p+2}{4}][\eta_{1}^{2}-1+\frac{q-p+2}{4}]\\
=&[\eta_{0}^{2}+\frac{q-p-2}{4}][\eta_{1}^{2}+\frac{q-p-2}{4}]\ (denote \frac{q-p-2}{4}\ by \  d)\\
=&\frac{(1+pq)^{2}}{16}+d[1-\frac{1+pq}{2}]+d^{2}.\\
A_{0}B_{1}+&A_{1}B_{0}\\
=&[\eta_{0}^{2}-2\eta_{0}+\frac{q-p+2}{4}][\delta_{0}^{p}\delta_{1}^{q}+\delta_{1}^{p}\delta_{0}^{q}]+[\eta_{1}^{2}-2\eta_{1}+\frac{q-p+2}{4}][\delta_{0}^{p}\delta_{0}^{q}+\delta_{1}^{p}\delta_{1}^{q}] \ (by \ Lemma \ 3)\\
=&[\eta_{1}^{2}+d][\delta_{0}^{p}\delta_{1}^{q}+\delta_{1}^{p}\delta_{0}^{q}]+[\eta_{0}^{2}+d][\delta_{0}^{p}\delta_{0}^{q}+\delta_{1}^{p}\delta_{1}^{q}]\\
=&\eta_{1}^{2}[\delta_{0}^{p}\delta_{1}^{q}+\delta_{1}^{p}\delta_{0}^{q}]+\eta_{0}^{2}[\delta_{0}^{p}\delta_{0}^{q}+\delta_{1}^{p}\delta_{1}^{q}]+d[\delta_{0}^{p}\delta_{1}^{q}+\delta_{1}^{p}\delta_{0}^{q}+\delta_{0}^{p}\delta_{0}^{q}+\delta_{1}^{p}\delta_{1}^{q}] \ (by\  Lemma \ 6)\\
=&\frac{1-pq}{4}\pm\frac{pq}{2}+d[(\delta_{0}^{p}+\delta_{1}^{p})(\delta_{0}^{q}+\delta_{1}^{q})] \ (by \ Lemmas \ 4 \ and \ 5)\\
=&\frac{1-pq}{4}\pm\frac{pq}{2}+d.
\end{align*}
\begin{align*}
B_{1}B_{0}=&[\delta_{0}^{p}\delta_{1}^{q}+\delta_{1}^{p}\delta_{0}^{q}][\delta_{0}^{p}\delta_{0}^{q}+\delta_{1}^{p}\delta_{1}^{q}]\\
=&\delta_{1}^{q}\delta_{0}^{q}(\delta_{0}^{p})^{2}+\delta_{0}^{p}\delta_{1}^{p}(\delta_{1}^{q})^{2}+\delta_{0}^{p}\delta_{1}^{p}(\delta_{0}^{q})^{2}+\delta_{1}^{q}\delta_{0}^{q}(\delta_{1}^{p})^{2}\ (by\ Lemmas \ 4 and \ 5)\\
=&[(\delta_{0}^{p})^{2}+(\delta_{1}^{p})^{2}]\delta_{1}^{q}\delta_{0}^{q}+[(\delta_{0}^{q})^{2}+(\delta_{1}^{q})^{2}]\delta_{1}^{p}\delta_{0}^{p}\\
=&(1-2\delta_{1}^{p}\delta_{0}^{p})\delta_{1}^{q}\delta_{0}^{q}+(1-2\delta_{1}^{q}\delta_{0}^{q})\delta_{1}^{p}\delta_{0}^{p}\\
=&(1-2\times\frac{1-p}{4})\times\frac{1+q}{4}+(1-2\times\frac{1+q}{4})\times\frac{1-p}{4}\\
=&\frac{1+pq}{4}
\end{align*}
Then $\Delta=[\frac{(1+pq)^{2}}{16}+(\frac{\pm1-d}{2})pq+d^{2}+\frac{3}{2}d+\frac{1}{2}]$.
\end{proof}

\begin{lemma}\cite{SunY}
Let $p$ and $q$ be different primes with $N=pq$, then $\mathrm{gcd}(2^{p}-1,\frac{2^{N}-1}{2^{p}-1})=\mathrm{gcd}(2^{p}-1,q)$ and $\mathrm{gcd}(2^{q}-1,\frac{2^{N}-1}{2^{q}-1})=\mathrm{gcd}(2^{q}-1,p)$. Especially, if $p<q$, then $\mathrm{gcd}(2^{q}-1,\frac{2^{N}-1}{2^{q}-1})=1.$
\end{lemma}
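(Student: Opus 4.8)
The plan is to exploit the elementary factorization $\frac{2^{N}-1}{2^{p}-1}=\sum_{j=0}^{q-1}2^{jp}$, valid because $p\mid N=pq$, and then reduce this geometric sum modulo $2^{p}-1$. The whole lemma is purely number-theoretic and should follow without any of the cyclotomic machinery developed earlier.

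First I would set $T=\frac{2^{N}-1}{2^{p}-1}=1+2^{p}+2^{2p}+\cdots+2^{(q-1)p}$. Working modulo $2^{p}-1$, one has $2^{p}\equiv 1$, hence $2^{jp}\equiv 1$ for every $j$, so that $T\equiv q \pmod{2^{p}-1}$. Using the standard identity $\gcd(a,b)=\gcd(a,\,b\bmod a)$, this yields at once $\gcd(2^{p}-1,T)=\gcd(2^{p}-1,q)$. The second assertion $\gcd(2^{q}-1,\frac{2^{N}-1}{2^{q}-1})=\gcd(2^{q}-1,p)$ follows by the identical computation with the roles of $p$ and $q$ interchanged, so no new idea is needed there.

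For the "especially" clause it suffices, by the second identity, to show $\gcd(2^{q}-1,p)=1$ when $p<q$. Since $p$ is prime this gcd is either $1$ or $p$, and it equals $p$ exactly when $2^{q}\equiv 1\pmod p$. I would argue by contradiction through the multiplicative order $d$ of $2$ modulo $p$: if $p\mid 2^{q}-1$ then $d\mid q$, and since $q$ is prime while $d\neq 1$ (as $2\not\equiv 1\pmod p$), necessarily $d=q$. But Fermat's little theorem gives $d\mid p-1$, forcing $q\mid p-1$ and hence $q\le p-1<q$, a contradiction; thus $\gcd(2^{q}-1,p)=1$. The degenerate possibility $p=2$ is immediate, since then $2^{q}-1$ is odd.

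I do not anticipate a genuine obstacle. The only step that requires a little care is this final order argument, where one must combine the two divisibility constraints $d\mid q$ and $d\mid p-1$ and play them against the hypothesis $p<q$; everything else is the reduction of a geometric series modulo $2^{p}-1$, which is the real crux of the statement.
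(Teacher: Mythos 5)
Your proof is correct. Note that the paper itself gives no proof of this lemma at all --- it is quoted verbatim from the reference [SunY] (Sun, Wang, Yan, \emph{A lower bound on the 2-adic complexity of the modified Jacobi sequence}), so there is no in-paper argument to compare against; your write-up supplies exactly the standard argument one would expect that reference to contain. The two ingredients --- reducing the geometric sum $\frac{2^{N}-1}{2^{p}-1}=\sum_{j=0}^{q-1}2^{jp}\equiv q \pmod{2^{p}-1}$, and the order argument (if $p\mid 2^{q}-1$ with $p$ odd, the order $d$ of $2$ modulo $p$ satisfies $d\mid q$, $d\neq 1$, hence $d=q$, while $d\mid p-1$ forces $q<p$) --- are both handled correctly, and you rightly dispose of the degenerate case $p=2$ separately, since the order of $2$ modulo $2$ is undefined. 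No gap.
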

\begin{theorem}
Let $p$ and $q$ be two primes satisfying $p\equiv1\mod 4,q\equiv3\mod 4$ with $|q-p|<\sqrt{pq}-1$. Let $\{s_{i}\}_{i=0}^{N-1}$ be the Whiteman generalized cyclotomic sequence defined in Eq.(\ref{eq1}). Then the lower bound of the 2-adic complexity $\phi_{2}(s)$ is $pq-p-q-1$.
\end{theorem}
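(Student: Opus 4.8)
The plan is to feed the determinant of Lemma~\ref{mainlemma} into the divisibility \eqref{method} of Lemma~\ref{main method-2} and the closed form \eqref{eq3} for $\varphi_2(s)$. Since \eqref{method} gives $\gcd(S(2),2^N-1)\mid\gcd(\mathrm{det}(A),2^N-1)$, the whole statement reduces to producing a good upper bound $G$ for $\gcd(\mathrm{det}(A),2^N-1)$: once $G<2^{p+q+1}$ is established, one has
\[
\frac{2^N-1}{\gcd(S(2),2^N-1)}+1\ \ge\ \frac{2^N-1}{G}+1\ >\ 2^{pq-p-q-1},
\]
and \eqref{eq3} then yields $\varphi_2(s)\ge pq-p-q-1$. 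So everything hinges on controlling the common odd prime factors of $\mathrm{det}(A)$ and $2^N-1$.

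First I would pin down the arithmetic of $\Delta$. Completing the square in the expression of Lemma~\ref{mainlemma}, with $d=\frac{q-p-2}{4}$, the relevant sign collapses $\Delta$ to a perfect square, namely
\[
\Delta=\Big(\tfrac{(p-1)(q+1)}{4}\Big)^{2},
\]
which one verifies directly since $16\Delta=(pq+p-q-1)^2=\big((p-1)(q+1)\big)^2$. Substituting this together with $\delta_1^p\delta_0^p=\frac{1-p}{4}$ and $\delta_1^q\delta_0^q=\frac{1+q}{4}$ (Lemma~\ref{SunY}) back into Lemma~\ref{mainlemma}, every factor of $\mathrm{det}(A)$ is, up to sign and a power of $2$, built from $pq-1$, $p-1$ and $q+1$; hence every odd prime $r\mid\mathrm{det}(A)$ divides $(pq-1)(p-1)(q+1)$. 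Note that $p\equiv1$, $q\equiv3\pmod4$ force $4\mid p-1$ and $4\mid q+1$, so the powers of $2$ are irrelevant for a gcd with the odd number $2^N-1$.

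The core of the proof is the estimate on $G=\gcd(\mathrm{det}(A),2^N-1)$, and this is where the genuine difficulty lies, because $\Delta^{e/2}$ carries $p-1$ and $q+1$ with the huge exponent $e=\frac{(p-1)(q-1)}{2}$, so any odd prime dividing both $(p-1)(q+1)$ and $2^N-1$ enters $G$ with its full $2^N-1$-multiplicity. To handle this I would argue through the order of $2$: any odd $r\mid 2^N-1$ has $\mathrm{ord}_r(2)\in\{p,q,pq\}$, so $r\equiv1$ modulo $p$, $q$ or $pq$; since a prime dividing $(p-1)(q+1)$ is at most $q+1<pq+1$, the order $pq$ never occurs for such $r$. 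A prime $r\mid p-1$ then forces $\mathrm{ord}_r(2)=q$ (hence $q<p$) and a prime $r\mid q+1$ forces $\mathrm{ord}_r(2)=p$ (hence $p<q$); these two cases are mutually exclusive, and in either case all such primes divide $2^{\min(p,q)}-1$. Using the last lemma together with lifting-the-exponent, their $2^N-1$-multiplicity already appears in $2^{\min(p,q)}-1$ (the cofactor never reintroduces $r$, as $r\nmid p$ and $r\nmid q$), so their joint contribution to $G$ is a divisor of $2^{\min(p,q)}-1$. The only remaining primes divide $pq-1$ but not $p-1$ (a common factor would divide $\gcd(p-1,q-1)=2$), and they contribute at most $pq-1$. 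Here the hypothesis $|q-p|<\sqrt{pq}-1$ is used to keep $p$ and $q$ of comparable size so that these order comparisons are clean. Altogether $G\le(pq-1)\,2^{\min(p,q)}<2^{\max(p,q)}2^{\min(p,q)}=2^{p+q}<2^{p+q+1}$, using $pq-1<2^{\max(p,q)}$ (valid since $\max(p,q)\ge5$), which closes the argument via the first display.

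The step I expect to be the real obstacle is the second half of the third paragraph: bounding the multiplicity with which the primes of $(p-1)(q+1)$ — magnified by the exponent $e/2$ — can occur in $\gcd(\mathrm{det}(A),2^N-1)$, which is exactly where the order-of-$2$ dichotomy, the last lemma, and lifting-the-exponent do the work. The accompanying delicate point is fixing the sign in Lemma~\ref{mainlemma} so that $\Delta$ is genuinely the perfect square above, since without that collapse the factor $\Delta^{e/2}$ could share uncontrolled prime powers with $2^N-1$; everything else is bookkeeping with Lemma~\ref{mainlemma} and \eqref{eq3}.
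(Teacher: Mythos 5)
Your overall skeleton is the same as the paper's: reduce via Lemma~\ref{main method-2} and Eq.~\eqref{eq3} to bounding $\gcd(\det(A),2^N-1)$, classify the common prime factors $r$ by $\mathrm{Ord}_r(2)\in\{p,q,pq\}$, and control multiplicities of the order-$p$ and order-$q$ primes with the last lemma ($\gcd(2^p-1,(2^N-1)/(2^p-1))=\gcd(2^p-1,q)$, etc.); that part of your argument is sound and in fact cleaner than the paper's. The genuine gap is at the decisive step: you assert that ``the relevant sign'' in Lemma~\ref{mainlemma} collapses $\Delta$ to the perfect square $\bigl(\tfrac{(p-1)(q+1)}{4}\bigr)^2$. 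The $\pm$ in Lemma~\ref{mainlemma} is not yours to choose: it is inherited from Lemma~2, where it encodes the undetermined signs of the Gaussian periods (i.e.\ of $\eta_0-\eta_1$ and $\delta_0^p-\delta_1^p$, $\delta_0^q-\delta_1^q$), a Gauss-sum sign problem that the paper never resolves --- which is exactly why the paper's proof runs two cases (i) and (ii). Algebraically, one sign gives $16\Delta=\bigl((p-1)(q+1)\bigr)^2$, but the other gives $16\Delta=\bigl((p-1)(q+1)\bigr)^2+16pq$, which is not a perfect square in general and whose prime factorization is uncontrolled. In that case your central claim --- that every odd prime factor of $\det(A)$ divides $(pq-1)(p-1)(q+1)$ --- collapses, and with it the bound $G<2^{p+q+1}$. (The sign can in fact be pinned down, and it is the favorable one, but this requires identifying the Whiteman classes with the Jacobi-symbol classes, the classical evaluation of the sign of the quadratic Gauss sum, the factorization $G(\chi_p\chi_q)=\chi_p(q)\chi_q(p)G(\chi_p)G(\chi_q)$, and quadratic reciprocity --- machinery that neither you nor the paper develops, so as written this is an unproved assumption, not a verification.)

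The tell-tale symptom is that your proof never actually uses the hypothesis $|q-p|<\sqrt{pq}-1$; your remark that it keeps $p$ and $q$ ``of comparable size'' does no work. In the paper, that hypothesis is exactly what handles the uncontrolled case: for a prime $r$ with $\mathrm{Ord}_r(2)=pq$ one has $r\equiv1\pmod{pq}$, and writing $16\Delta=pq(pq+10\mp 8d\cdots)+(4d+3)^2$ with $(4d+3)^2=(q-p+1)^2<pq$, the paper derives a size contradiction ($r$ would have to equal $pq+1$, which is even) showing $\gcd(r,\Delta)=1$ \emph{for either sign}; the order-$p$ and order-$q$ primes are then absorbed into $(2^p-1)(2^q-1)$ just as you do. To repair your proposal you must either run a factor analysis that survives the ``$+$'' case (your perfect-square factorization cannot), or supply the Gauss-sum sign determination sketched above. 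As it stands, the proof covers only one of the two cases that the theorem requires.
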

\begin{proof}
Let $r$ be a prime factor of $2^{N}-1$ and $\mathrm{Ord}_{r}(2)$ be the multiplicative order of 2 modulo $r$. Since $2^{N}-1=0\mod r, \mathrm{Ord}_{r}(2)|N$. So we get $\mathrm{Ord}_{r}(2)\in\{pq,p,q\}$. By Fermat's little theorem, we know that $2^{r-1}=1 \mod r$. Then $\mathrm{Ord}_{r}(2)|r-1$, therefore $r=k\mathrm{Ord}_{r}(2)+1$ where $k$ is a positive integer. From Lemma 7, we first calculate the value of $\mathrm{gcd}(\mathrm{det}(A)),2^{N}-1)$ for different cases.
\begin{itemize}
\item[(1)] $\mathrm{Ord}_{r}(2)=pq$.

In this case, we have $pq|r-1$, i.e., $r>pq$. Among the factors in $\mathrm{det}(A)$, $\frac{pq-1}{2},\frac{1-p}{4},\frac{1+q}{4}$ are less than $pq$. So we only need to analyze $\Delta$.

$(i)$ If $\Delta=[\frac{(1+pq)^{2}}{16}+\frac{1-d}{2}pq+d^{2}+\frac{3}{2}d+\frac{1}{2}]$, then there exist two integers $t$ and $k$ such that $r=kpq+1$ and $\Delta=[\frac{(1+pq)^{2}}{16}+\frac{1-d}{2}pq+d^{2}+\frac{3}{2}d+\frac{1}{2}]=t(kpq+1)$,i.e. $(1+pq)^{2}+8pq+16d^{2}+24d+8=16t(kpq+1)$ which is equivalent to
\begin{equation}\label{eq4}
pq(pq+10-8d)+(4d+3)^{2}=pq(16tk)+16t.
\end{equation}
By the fact that $|q-p|<\sqrt{pq}-1$, we have that $(4d+3)^{2}=(q-p+1)^{2}<pq$.
If $16t<pq$, then $16t=(4d+3)^{2}$ which is impossible for any integers $t$ and $d$. So $16t>pq$. By the eq.(\ref{eq4}), $16t>pq$ and $16kt<pq+10-8d$, again by the fact that $|q-p|<\sqrt{pq}-1$, then $8d=2(q-p-2)$ and  $16kt<pq+10-8d<pq+2\sqrt{pq}+16<2pq$. So $k=1$ and then $r=pq+1$ which contradicts with that $r$ is a prime. So $\mathrm{gcd}(r,\Delta)=1$.

$(ii)$ If $\Delta=[\frac{(1+pq)^{2}}{16}+\frac{1-d}{2}pq+d^{2}+\frac{3}{2}d+\frac{1}{2}]$, using the same method as that in the case $(i)$, we also have $\mathrm{gcd}(r,\Delta)=1$.

So, for the case of $\mathrm{Ord}_{r}(2)=pq$, we have $\mathrm{gcd}(r,\mathrm{det}(A))=1$.
\item[(2)] $\mathrm{Ord}_{r}(2)=p$.

 By Lemma 7, $\mathrm{gcd}(2^{p}-1,\frac{2^{N}-1}{2^{p}-1})=\mathrm{gcd}(2^{p}-1,q)$. $2^{r-1}=1\ \mathrm{mod}\ r$, So $p|r-1$ and then $r>p$. If $q<p$, then $q$ is not a prime factor of $2^{p}-1$. So $\mathrm{gcd}(2^{p}-1,q)=1$ which implies that $r$ is a factor of $2^{p}-1$. If $q>p$, suppose $\mathrm{gcd}(r,q)\neq1$ then $\mathrm{gcd}(r,q)=q$, so there exists a positive integer $k$ such that
 \begin{equation}\label{eq6}
 kp+1=q
 \end{equation}
 which is equivalent to $q-1=kp$. Because $2=\mathrm{gcd}(p-1,q-1)=\mathrm{gcd}(p-1,kp)=\mathrm{gcd}(p-1,k)$, so $k=3\mod4$. In Eq.\eqref{eq6},$k=3\mod4,p=1\mod4$ and $q=3\mod4$ which is impossible. So $\mathrm{gcd}(r,q)=1$. Then $\mathrm{gcd}(2^{p}-1,\frac{2^{N}-1}{2^{p}-1})=1$.

\item[(3)]$\mathrm{Ord}_{r}(2)=q$.

 Using the same method as that in the case $(2)$, we also have
 $\mathrm{gcd}(2^{q}-1,\frac{2^{N}-1}{2^{q}-1})=\mathrm{gcd}(2^{q}-1,p)=1$.
\end{itemize}
Combining the above discussion, we get
\begin{align*}
\mathrm{log}_{2}\big\lfloor \frac{2^{N}-1}{\mathrm{gcd}(s(2),2^{N}-1)}\big\rfloor &\geq \mathrm{log}_{2}\lfloor \frac{2^{N}-1}{\mathrm{gcd}(\mathrm{det}(A),2^{N}-1)}\rfloor\\
&\geq \mathrm{log}_{2}\lfloor \frac{2^{N}-1}{(2^{p}-1)(2^{q}-1)}\rfloor\geq N-p-q-1.
\end{align*}
\end{proof}

\begin{theorem}
Let $p$ and $q$ be two primes satisfying $p\equiv1\mod 4,q\equiv3\mod 4$ with $q-p=2$. Let $\{s_{i}\}_{i=0}^{N-1}$ be the Whiteman generalized cyclotomic sequence defined in Eq.(\ref{eq1}). Then the 2-adic complexity of $\{s_{i}\}_{i=0}^{N-1}$ is maximal, i.e.,
$$\phi_{2}(s)=N.$$
\end{theorem}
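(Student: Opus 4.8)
The plan is to show that the 2-adic complexity is maximal by proving $\mathrm{gcd}(\mathrm{det}(A),2^{N}-1)=1$; by Lemma \ref{main method-2} this forces $\mathrm{gcd}(S(2),2^{N}-1)=1$, and then \eqref{eq3} yields $\phi_{2}(s)=\lfloor\mathrm{log}_{2}2^{N}\rfloor=N$. First I would specialize Lemma \ref{mainlemma} to the hypothesis $q-p=2$, so that $d=\frac{q-p-2}{4}=0$ and
\[
\mathrm{det}(A)=\frac{pq-1}{2}\Big(\frac{1-p}{4}\Big)^{\frac{p-1}{2}}\Big(\frac{1+q}{4}\Big)^{\frac{q-1}{2}}\Delta^{\frac{e}{2}},\qquad \Delta=\frac{(pq-3)^{2}}{16}\ \text{or}\ \frac{(pq+1)(pq+9)}{16}.
\]
Using $q=p+2$ I would rewrite the relevant integers as $pq-3=(p-1)(p+3)$, $pq+1=(p+1)^{2}$, $pq+9=p^{2}+2p+9$, and $\frac{1+q}{4}=\frac{p+3}{4}$; since $\Delta>0$ in either case we have $\mathrm{det}(A)\neq0$, so Lemma \ref{main method-2} applies.

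Since $2^{N}-1$ is odd, the powers of $2$ in $\mathrm{det}(A)$ are irrelevant, and it suffices to show that no odd prime $r\mid 2^{N}-1$ divides $\mathrm{det}(A)$. For any such $r$ one has $\mathrm{Ord}_{r}(2)\in\{p,q,pq\}$ and $r\equiv1\pmod{\mathrm{Ord}_{r}(2)}$, which, because $r$ is odd and $p,q,pq$ are odd, sharpens to $r\ge 2\,\mathrm{Ord}_{r}(2)+1$. I would then isolate one elementary principle: every odd prime $r<p$ has $\mathrm{Ord}_{r}(2)\le r-1<p$, and the only divisor of $pq$ below $p$ is $1$, so $r\nmid 2^{N}-1$. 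This principle disposes of the factors $\frac{1-p}{4}$ and $\frac{p+3}{4}$ (both $<p$) and of the ``split'' pieces $(p-1)(p+3)$ and $(p+1)^{2}$ of $\Delta$, whose odd prime divisors are all $<p$. The case $\mathrm{Ord}_{r}(2)=pq$ is also immediate, exactly as in the previous theorem: then $r\ge 2pq+1$, while every prime factor of $\mathrm{det}(A)$ divides one of $\frac{pq-1}{2},\frac{1-p}{4},\frac{p+3}{4},pq-3,pq+1,pq+9$, each of which is positive and less than $r$.

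The heart of the argument is ruling out $\mathrm{Ord}_{r}(2)=p$ and $\mathrm{Ord}_{r}(2)=q$ for the two genuinely large factors, $\frac{pq-1}{2}$ and (in the second sign) $pq+9$, via a size-plus-congruence estimate on the cofactor. For instance, if $r\mid pq-1$ with $\mathrm{Ord}_{r}(2)=p$, then $r\ge 2p+1$ forces the cofactor $s=(pq-1)/r$ into the range $1\le s<p$, while $r\equiv1$ and $pq-1\equiv-1\pmod p$ force $s\equiv p-1\pmod p$, hence $s=p-1$; but then $r=(pq-1)/(p-1)$ is not an integer, since $pq-1=(p-1)(p+3)+2$ and $p\ge5$. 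The order-$q$ subcase is analogous, giving $s=q-1$ and the non-integrality of $(pq-1)/(q-1)$ from $0<p-1<q-1$. The same scheme applied to $pq+9$ gives $s\equiv9\pmod{p}$ (resp. $\pmod{q}$); for $p>9$ (resp. $q>9$) this forces $s=9$ and hence $9\mid pq$, which is impossible.

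The main obstacle is precisely this last step: $\frac{pq-1}{2}$ and $pq+9$ are of size $\sim pq$ and may have large prime divisors, so they cannot be dismissed on size alone, and one must exploit $r\equiv1\pmod{\mathrm{Ord}_{r}(2)}$ to trap the cofactor. The only residue not covered uniformly by this argument is $(p,q)=(5,7)$, which I would settle by direct computation: there $\Delta\in\{64,99\}$, whose odd prime parts are empty or $\{3,11\}$, so the odd primes occurring in $\mathrm{det}(A)$ lie in $\{3,11,17\}$, and none divides $2^{35}-1$ because their orders $2,10,8$ do not divide $35$. Collecting all cases gives $\mathrm{gcd}(\mathrm{det}(A),2^{N}-1)=1$, and therefore $\phi_{2}(s)=N$.
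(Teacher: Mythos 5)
Your proposal is correct and follows essentially the same strategy as the paper's own proof: specialize Lemma \ref{mainlemma} to $d=0$, and for each prime $r\mid 2^{N}-1$ rule out $r\mid\mathrm{det}(A)$ by a case analysis on $\mathrm{Ord}_{r}(2)\in\{p,q,pq\}$ combined with a congruence-plus-size argument trapping the cofactor of $r$ in $\frac{pq-1}{2}$ and $pq+9$. If anything, your write-up is slightly more careful than the paper's: the sharpened bound $r\geq 2\,\mathrm{Ord}_{r}(2)+1$, the direct size argument for the order-$pq$ case, and the explicit verification of $(p,q)=(5,7)$ --- a case the paper's step ``$t_{1}<p$ implies $t_{1}=9$'' silently assumes away by requiring $p>9$.
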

\begin{proof}
Let $r$ be a prime factor of $2^{N}-1$ and $\mathrm{Ord}_{r}(2)$ be the multiplicative order of 2 modulo $r$. Since $2^{N}\equiv1\mod r$, so $\mathrm{Ord}_{r}(2)|N$. Then we get $\mathrm{Ord}_{r}(2)\in\{pq,p,q\}$. By Fermat's little theorem, we know that $2^{r-1}=1\ \mod\ r$. Then $\mathrm{Ord}_{r}(2)|r-1$, therefore $r=k\mathrm{Ord}_{r}(2)+1$ where $k$ is a positive integer. From Lemma 7, we first calculate the value of $\mathrm{gcd}(\mathrm{det}(A)),2^{N}-1)$ for different cases.  For the case of $ord_{r}(2)=pq$, we have $\mathrm{gcd}(r,\mathrm{det}(A))=1$ by Theorem 1. Following, we only analyze $gcd(r,\mathrm{det}(A))$ for the $r$'s satisfying $\mathrm{Ord}_{r}(2)\in \{p,q\}$. For the reason that such $r$'s satisty $r>\mathrm{min}\{p,q\}$. Among the factors in $\mathrm{det}(A)$, $\frac{p-1}{4}$ and $\frac{1+q}{4}$ are less than $r$. So we only need to analyze $\mathrm{gcd}(r,\frac{pq-1}{2})$ and $\mathrm{gcd}(r,\Delta)$ for the case of $\mathrm{Ord}_{r}(2)\in \{p,q\}$.
\begin{itemize}
\item[(1)] $\mathrm{Ord}_{r}(2)=p$.\\

(i)$\mathrm{gcd}(r,\Delta)=1$.\\

If $\Delta=[\frac{(1+pq)^{2}}{16}+\frac{1+pq}{2}]$, suppose $r=kp+1$ is the factor of $\Delta$. Then
there exists an integer $t$ such that $\frac{(1+pq)^{2}}{16}+\frac{1+pq}{2}=t(kp+1)$, i.e.$\frac{(p+1)^{4}}{16}+\frac{(p+1)^{2}}{2}=t(kp+1)$ which is equivalent to $(p+1)^{4}+8(p+1)^{2}=16t(kp+1)$ i.e.$(p+1)^{2}[(p+1)^{2}+8]=16t(kp+1)$. By the fact that $r=kp+1$ is prime so $k>1$ and then $kp+1\nmid p+1$. So $kp+1\mid(p+1)^{2}+8$. Suppose
 \begin{equation}\label{eq7}
 (p+1)^{2}+8=t_{1}(kp+1)
 \end{equation}
 i.e.$p(p+2)+9=t_{1}kp+t_{1}$. If $t_{1}<p$, then $t_{1}=9$ and $9k=p+2=q$ which contradicts with $q$ is prime. So $t_{1}>p$ and then $t_{1}k<p+2$. So $k=1$ which contradicts with $r=kp+1$ is prime. So $\mathrm{gcd}(r,\Delta)=1$. \\
 If $\Delta=[\frac{(1+pq)^{2}}{16}+\frac{1-pq}{2}]$, suppose $\frac{(1+pq)^{2}}{16}+\frac{1-pq}{2}=t(kp+1)$ i.e.$(p+1)^{4}+8[1-p^{2}-2p]=16t(kp+1)$. That is $[(p+1)^{2}-4]^{2}=16t(kp+1)$. Then $kp+1|p+3$ or $p+1|p-1$ which is impossible. So we also have $\mathrm{gcd}(r,\Delta)=1$.\\

 (ii) $\mathrm{gcd}(r,\frac{pq-1}{2})=1$.\\

Suppose $\frac{pq-1}{2}=t(kp+1)$ i.e. $pq-1=2t(kp+1)$, that is to say,
\begin{equation}\label{eq8}
p(p+1)+(p-1)=2tkp+2t.
\end{equation}
If $2t<p$, in Eq. \eqref{eq8},then $2t=p-1$ and then $(p-1)kp=p(p+1)$ so $k(p-1)=p+1$ which is impossible. So $2t>p$ and $2tk<p+1$ which is impossible, too. So $ \mathrm{gcd}(r,\frac{(pq-1)}{2})=1$.\\

\item[(2)]$\mathrm{Ord}_{r}(2)=q$.\\

By the same method as that used in (1), we also get $\mathrm{gcd}(r,\Delta)=1$ and $\mathrm{gcd}(r,\frac{pq-1}{2})=1$. Furthermore,
$\mathrm{gcd}(r,\mathrm{det}(A))=1$.
\end{itemize}
By the above analysis, we have $\mathrm{gcd}(2^{N}-1,\mathrm{det}(A))=1$. Then

  $$N-1\geq \mathrm{log}_{2}\big\lfloor \frac{2^{N}-1}{\mathrm{gcd}(s(2),2^{N}-1)}+1\big\rfloor\geq \mathrm{log}_{2}\lfloor \frac{2^{N}-1}{\mathrm{gcd}(\mathrm{det}(A),2^{N}-1)}+1\rfloor=N.$$
\end{proof}

\begin{rem}
For the results obtained in this paper, we make the following two explanations:
\begin{itemize}
\item[(1)] For the result of Lemma \ref{mainlemma}, we have verified its correctness by Matlab and Mathematica programs through several examples.
\item[(2)] Through direct calculation using Mathematica programs, we get and list the exact values of 2-adic complexity and the lower bounds of the 2-adic complexity obtained in this paper for different cases of $(p,q)$ in the following table. From the table, we found that the exact values of the 2-adic complexity in all the following examples attain the maximal. So we conjecture that the 2-adic complexity of the sequences discussed in this paper may be really maximal for all possible parameters but we are not able to prove it in our ability. We also sincerely invite those readers who are interested in it to take part in this work.
\end{itemize}
\end{rem}
\begin{tabular}{cccccc}
  \hline
  $(p,q)$ & $\mathrm{2-adic}$ & $\mathrm{Lower \  bound}$ & $(p,q)$ &  $\mathrm{2-adic}$ & $\mathrm{Lower \  bound}$\\
\hline

  $(5,3)$ &  $15$              & $6$                     &  $(17,11)$ & $187$           &  $158$                   \\
  $(5,7)$ &  $35$              & $22$                    &  $(17,19)$ & $323$           &  $286$                 \\
  $(5,11)$&  $55$              & $38$                    &  $(17,23)$ & $391$           &  $350$                 \\
  $(13,11)$ & $143$            &  $118$                  &  $(17,31)$ & $527$           &  $478$                 \\
  $(13,23)$ & $299$            &  $262$                  &   $(17,43)$ & $731$          &  $670$                  \\

  \hline
\end{tabular}

\section{Summary}
In this paper, we study the 2-adic complexity of a class of balanced generalized cyclotomic sequences, which has been proved to have high linear complexity to resist linear attacks. The results of this paper show that it has also have large 2-adic complexity, i.e., it is larger than half of the period and is large enough to resist the attack of the RAA.


\begin{thebibliography}{}
\bibitem{Davis} P. J. Davis, "Circulant Matrices." New York, NY, USA: Chelsea, 1994.
\bibitem{Andrew Klapper} Klapper, A., Goresky, M.: Feedback shift registers, 2-adic span, and combiners with memory. Journal of Cryptology  10, 111-147 (1997).
\bibitem{Tian Tian} Tian, T., Qi, W.: 2-Adic complexity of binary $m$-sequences. IEEE Trans. Inform. Theory 56, 450-454 (2010).
\bibitem{Xionghai-1} Xiong, H., Qu, L., Li, C.: A new method to compute the 2-adic complexity of binary sequences. IEEE Trans. Inform. Theory 60, 2399-2406 (2014).
\bibitem{Hu Honggang} Hu, H.: Comments on  a new method to compute the 2-adic complexity of binary sequences.
 IEEE Trans. Inform. Theory 60, 5803-5804 (2014).
\bibitem{Xiong Hai-2}Xiong, H., Qu, L., Li, C.: 2-Adic complexity of binary sequences with interleaved structure. Finite Fields and Their Applications 33, 14-28 (2015).
\bibitem{Zeng X}Xiao Z, Zeng X and Sun Z. 2-Adic complexity of two classes of generalized cyclotomic binary sequences. International Journal of Foundations of Computer Science,  27 (07):879-893 (2016).
\bibitem{SunY}SunY, Wang Q and Yan T. A lower bound on the 2-adic complexity of the modified Jacobi sequence [J]. Cryptography and Communications, 21:1-13 (2019).
\bibitem{Sunyuhua-1}Sun Y, Wang Q, Wang Q and Yan T. A property of a class of Gaussian periods and its application [J]. IEICE Trans. Fundamentals, Vol. E101-A, No. 12, 2344-2351 (2018).
\bibitem{Hofer}Hofer R. and Winterhof A. On the 2-adic complexity of the two-prime generator [J]. IEEE Trans Inform. Theory, to appear.
\bibitem{Ding C}Ding C and Helleseth T. New generalized cyclotomy and its applications[J]. Finite Fields and Their Applications, 4(2):140-166 (1998).
\bibitem{Bai}Bai E, Liu X and Xiao G. Linear complexity of new generalized cyclotomic sequences of order two of length [J]. IEEE Trans. Inform. Theory,  51(5):1849-1853 .
\bibitem{Yan} Yan T. Construction and properties of Pseudo-random sequences. Xi'an, Xidian University, 2007.
\bibitem{Zhao}Zhao C, Sun Y and Yan T. Study on 2-adic complexity of a class of balanced generalized cuclotomic sequences[J].Journal of Cryptologic Research,6(4):455-462 (2019).

\end{thebibliography}
\end{document}